\documentclass[11pt,a4paper]{article}

\usepackage{graphicx}
\usepackage[english]{babel}
\usepackage{latexsym}
\usepackage{url}
\usepackage{amssymb}
\usepackage{amsmath}
\usepackage{amsthm}
\usepackage{amsfonts}
\usepackage{ifthen}

\usepackage[left=1.2in,top=1.2in,right=1.2in,bottom=1.2in,nohead]{geometry}

\newtheorem{theorem}{Theorem}
\newtheorem{corollary}[theorem]{Corollary}
\newtheorem{lemma}[theorem]{Lemma}

\newcommand{\RR}{\ensuremath{\mathbb R}}  % real numbers
  % integer numbers
\newcommand{\NN}{\ensuremath{\mathbb N}}  % natural numbers
\newcommand{\D}{\mathcal{D}}  
\newcommand\CR{\hbox{\tt cr}}		  % crossing number
\newcommand\MC{\hbox{\tt mwc}}		  % size optimal multiway cut

\newcommand\eps{\varepsilon}

\def\DEF#1{\textbf{\emph{#1}}}

\begin{document}

\title{Hardness of Approximation for Crossing Number\thanks{This work has been partially financed 
		by the Slovenian Research Agency, program P1-0297, project J1-4106, and
		within the EUROCORES Programme EUROGIGA (project GReGAS) of the European Science Foundation.
		Research partially carried out during the BIRS Workshop Crossing Numbers Turn Useful, 2011.}}

\author{Sergio Cabello\thanks{Department of Mathematics, IMFM, and 
				Department of Mathematics, FMF, University of Ljubljana, Slovenia.	
				email: {\tt sergio.cabello@fmf.uni-lj.si}}
}

\date{\today}

\maketitle

\begin{abstract}
We show that, if P$\not=$NP, there is a constant $c_0>1$ such that there is no $c_0$-approximation algorithm
for the crossing number, even when restricted to $3$-regular graphs.
\end{abstract}

%%%%%%%%%%%%%%%%%%%%%%%%%%%%%%%%%%%%%%%%%%%%%%%%%%%%%%%%%%%%%%%%%%%%%%%%%%%%%%%%%%%%%%%%%%%%%%%%%%%%%%%%%%%%%%%%%%%%%%
\section{Introduction}

A \DEF{drawing} of a graph $G$ is a mapping $\D$ associating a point $\D(v)\in \RR^2$
to each vertex $v\in V(G)$ and a simple, polygonal path $\D(e)$ to each edge $e\in E(G)$ with the following
properties:
\begin{itemize}
\item for any two distinct vertices $u,v\in V(G)$, $\D(u)\not= \D(v)$;
\item for every edge $uv\in E(G)$, the endpoints of the path $\D(uv)$ are $\D(u)$ and $\D(v)$;
\item for every edge $e\in E(G)$ and every vertex $u\in V(G)$, the (relative) interior of the path $\D(e)$ is disjoint from $\D(u)$.
\end{itemize}

A \DEF{crossing} in a drawing $\D$ of a graph $G$ is a pair $(\{e,e'\},p)$, where $e$ and $e'$ are distinct edges of $G$
and $p\in \RR^2$ is a point that belongs to the interior of the paths $\D(e)$ and $\D(e')$.
The number of crossings of a drawing $\D$ is denoted by $\CR(\D)$ and is called the
\DEF{crossing number} of the drawing.
The \DEF{crossing number} $\CR(G)$ of a graph $G$ is the minimum $\CR(\D)$
taken over all drawings $\D$ of $G$. 
A drawing without crossings is an \DEF{embedding}.

Garey and Johnson~\cite{GJ} showed that the following optimization problem 
is NP-hard.
\begin{quote}
	{\sc CrossingNumber}.\\
	\emph{Instance:} A graph $G$.\\
	\emph{Feasible solutions:} Drawings of $G$.\\
	\emph{Measure:} Crossing number of the drawing.\\
	\emph{Goal:} Minimization.
\end{quote}
This result has been extended in several directions.
Hlin\v{e}n\'y~\cite{Hli} proved that the problem remains NP-hard for \DEF{cubic} graphs (3-regular graphs). 
This was reproved using crossing numbers with rotation systems by Pelsmajer et~al.~\cite{PSS}.
In a recent paper with Mohar~\cite{cm-socg-2010} we have shown that computing 
the crossing number for near-planar graphs is NP-hard. A graph is \DEF{near-planar} if
it is obtained from a planar graph by adding one edge.

None of the these proofs implies inapproximability of {\sc CrossingNumber} under the assumption that P$\not=$NP.
However, under stronger assumptions, inapproximability can be obtained from known results.
More precisely, the NP-hardness proof of Garey and Johnson~\cite{GJ} is from {\sc LinearArrangement}
and it implies that any inapproximability result for {\sc LinearArrangement} 
carries into an inapproximability result for {\sc CrossingNumber}. 
Amb{\"u}hl et al.~\cite{AMS} have recently shown that there
is no polynomial-time approximation scheme (PTAS) for {\sc LinearArrangement}, unless
NP-complete problems can be solved in randomized subexponential time.
(The precise assumption is that {\sc Satisfiability} cannot be solved 
in probabilistic time $2^{n^\epsilon}$ for any constant $\epsilon>0$.)
This directly implies that there is no PTAS for {\sc CrossingNumber}, unless 
NP-complete problems can be solved in randomized subexponential time. 
Although the NP-hardness proofs for cubic graphs by Hlin\v{e}n\'y~\cite{Hli} and Pelsmajer et~al.~\cite{PSS} also use 
reductions from {\sc LinearArrangement}, they do not imply any inapproximability 
because the value of the optimal linear arrangement is a lower-order term in the crossing number of the graphs constructed
in the reduction.

In this paper we show that, if P$\not=$NP, there is some constant $c_0>1$
such that there is no $c_0$-approximation for {\sc CrossingNumber}. The result holds also for cubic graphs.
Therefore, we strength the result mentioned in the previous paragraph by weakening the hypothesis.
Moreover, our reduction also implies inapproximability for cubic graphs, which was not known before under
any assumption.
We also provide a conceptually new proof of NP-hardness because we reduce from {\sc MultiwayCut}.
As noted by Hlin\v{e}n\'y~\cite{Hli}, for cubic graphs the minor crossing number is equal to the crossing number.
Thus, we also obtain inapproximability results for the minor crossing number.

On the positive side, the best approximation algorithm for {\sc CrossingNumber}, by Chuzhoy~\cite{chuzhoy},
has an approximation factor of $O(n^{9/10}\operatorname{poly}(\Delta \log n))$ for graphs with $n$ vertices
and maximum degree $\Delta$.  
It is worth noting that computing the crossing number is fixed-parameter tractable with respect
to the crossing number itself~\cite{G,KR}. Research on crossing number has been very active.
Vrt'o~\cite{Vrto} lists over 600 references.

%========================================================================
\section{Preliminaries}
\label{sec:prelim}

\paragraph{Edge weights.} Our construction will be easier to describe if we work with weighted edges.  
The weights will always be positive integers.
Assume that $G$ is an edge-weighted graph where the weight of each edge $e$ is 
denoted by $w_e\in \NN$. The intuition is that $w_e$ tells how many parallel edges are represented by $e$. 
The crossing number of a drawing for such edge-weighted graph is defined by taking the sum of $w_e\cdot w_{e'}$, 
over all crossings $(\{ e,e'\},p)$ of the drawing. Again, the crossing number of such edge-weighted
graph is defined as the minimum of the crossing numbers over all drawings. 

Let $G$ be an edge-weighted graph.
We can construct an unweighted graph $\phi(G)$ from $G$ by replacing each edge $uv\in E(G)$ 
with a family $P_{uv}$ of $w_e$ parallel paths of length $2$ that connect $u$ to $v$.
It is easy to see that $\CR(G)=\CR(\phi(G))$. Indeed, any drawing $\D_G$ of $G$ gives rise to a drawing $\D_{\phi(G)}$ of $\phi(G)$ 
with $\CR(\D_{\phi(G)})=\CR(\D_G)$ by drawing each family $P_e$ within a small neighborhood of $\D_G(e)$.
On the other hand, any drawing $\D_{\phi(G)}$ of $\phi(G)$ can be used to construct a drawing $\D_G$ of $G$ with 
$\CR(\D_G)\le \CR(\D_{\phi(G)})$ by drawing each edge $e\in E(G)$ along the path of $P_e$ that participates in fewer crossings.

When the weights $w_e\in \NN$, $e\in E(G)$, are all bounded by a polynomial in $|V(G)|$, then the graph
$\phi(G)$ can be constructed from $G$ in polynomial time.

\paragraph{Rotation systems.} 
A \DEF{rotation system} in a graph $G$ is a list $\pi=(\pi_v)_{v\in V(G)}$, where each $\pi_v$ is a cyclic ordering
of the edges of $G$ incident to $v$. 
A drawing $\D$ of a graph $G$ agrees with the rotation system $\pi$ if, for each vertex $v\in V(G)$, the 
clockwise ordering around $\D(v)$ of the drawings of the edges incident to $v$ is the same as the cyclic
ordering $\pi_v$.
For a graph $G$ and a rotation system $\pi$ in $G$, we define $\CR(G,\pi)$ as the minimum of the crossing numbers 
over all drawings of $G$ that agree with $\pi$.
This concept can easily be extended to edge-weighted graphs. 

If $G$ is an edge-weighted graph and $\pi$ is a rotation system in $G$, 
we can define a rotation system $\phi_\pi(G,\pi)$ in $\phi(G)$:
for each edge $uv\in E(G)$, we replace in $\pi_u$ the edge $uv$ by the edges of $P_{uv}$ incident
to $u$ in such a way that the cyclic ordering of the paths in $P_{uv}$ are opposite at $u$ and $v$.
This implies that the paths of $P_{uv}$ can be drawn without crossings among themselves.
The same argument that was used above shows that 
$\CR(G,\pi)=\CR(\phi(G),\phi_\pi(G,\pi))$. The rotation $\phi_\pi(G,\pi)$ can be computed in polynomial time
provided that the edge-weights of $G$ are bounded by a polynomial in $|V(G)|$. 

A \DEF{combinatorial embedding} of a graph $G$ is a rotation system $\pi$ such that some embedding $\D$ of
$G$ agrees with $\pi$. Whitney's theorem states that a $3$-connected, planar graph 
has a unique combinatorial embedding~\cite[Chapter 4]{Diestel}. 

Consider any graph $G$ and any rotation system $\pi$ in $G$. In an optimal drawing of $G$ 
that agrees with $\pi$, each pair of edges participates in at most one crossing.
Indeed, if the edges $e$ and $e'$ would participate in two crossings $(\{e,e'\},p)$ and $(\{e,e'\},p')$,
we could obtain another drawing with fewer crossings: we exchange the portions of $D(e)$ and $D(e')$ between $p$ and $p'$ 
and then perturb the drawing around $p$ and $p'$ slightly to avoid the intersection.
In particular, for any rotation system $\pi$ of the complete graph $K_t$ we have $\CR(K_t,\pi)\le t^4$.

\paragraph{Multiway cut.} 
Our reduction will be from the following optimization problem about connectivity:
\begin{quote}
	{\sc MultiwayCut}.\\
	\emph{Instance:} A pair $(G,T)$ where $G$ is a connected graph and $T\subset V(G)$.\\
	\emph{Feasible solutions:} Sets of edges $F\subseteq E(G)$ such that, for each distinct $t,t'\in T$,
		there is no path in $G-F$ connecting $t$ to $t'$.\\
	\emph{Measure:} Cardinality of $F$.\\
	\emph{Goal:} Minimization.
\end{quote}

The set $T$ is the set of \DEF{terminals}.
Dahlhaus et al.~\cite{djpsy-94} proved 
that {\sc MultiwayCut} is MAX SNP-hard even when restricted
to instances with $3$ terminals\footnote{In Dahlhaus et al.~\cite{djpsy-94}
the problem was called multiterminal cut, but most recent works refer to it as multiway cut.}. 
This implies that there is a constant $c_M>1$ such 
that there is no $c_M$-approximation algorithm for {\sc MultiwayCut} with $3$ terminals, unless P$\not=$NP.
(In particular the problem is APX-hard for $3$ terminals; see~\cite{aetal-book}.)
We will only use instances $(G,T)$ of {\sc MultiwayCut} with $|T|=3$.
We denote by $\MC(G,T)$ the size of an optimal solution for $(G,T)$.

\paragraph{Notation.} We use $[3]=\{1,2,3\}$ and, for the rest of the paper, 
the indices depending on $i$ are always taken modulo $3$.

%========================================================================
\section{From Multiway Cut to Crossing Number}
\label{sec:hard1}

Let $A$ be the graph defined by
\begin{align*}
		V(A)~&=~ \{ a_1,b_1,x_1,a_2,b_2,x_2,a_3,b_3,x_3,a_4\}, \\
		E(A)~&=~ \bigcup_{i\in [3]} \{ a_ib_i, a_ia_4, b_i b_{i+1}, a_ix_{i+1}, a_ix_{i-1}\}.
\end{align*}
The graph $A$ is shown in Figure~\ref{fig:graphD}, where it is clear that $A$ is planar.
Furthermore, it is a subdivision of a $3$-connected graph, and thus it has
a unique combinatorial embedding. 

\begin{figure}
\centering
	\includegraphics[width=.5\textwidth]{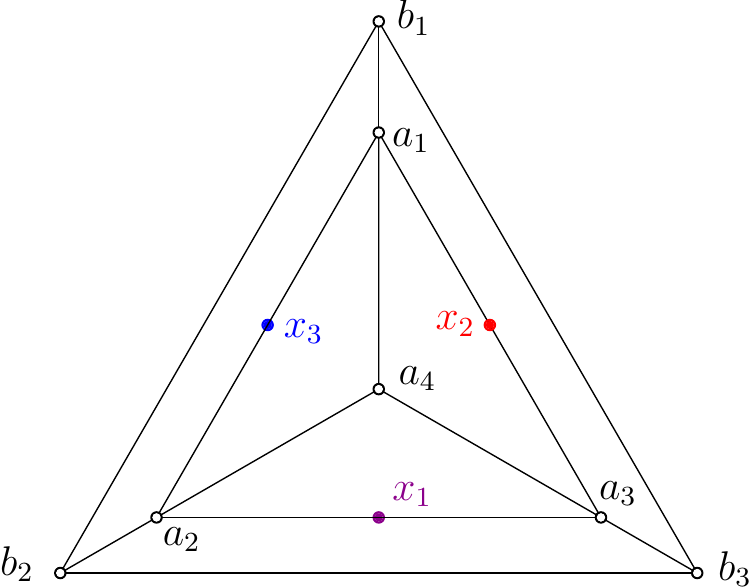}
	\caption{The graph $A$.}
	\label{fig:graphD}
\end{figure}

Consider any instance $(G,T)$ to {\sc MultiwayCut} with $|T|=3$.
We will use $n=|V(G)|$ and $n^2$ as a rough upper bound to $|E(G)|$.
We construct an edge-weighted graph $H=H(G,T)$ as follows:
\begin{itemize}
	\item[(i)] Construct $A$ and assign weight $n^5$ to its edges.
	\item[(ii)] Construct the graph $H'= G \cup A$, where the edges of $G$ have weight $1$.
	\item[(iii)] We identify each vertex of $T$ with a distinct vertex $x_i$ of $A$.
		That is, if $T=\{ t_1,t_2,t_3\}$,
		then, for each $i\in [3]$, identify $x_i$ and $t_i$.
\end{itemize}
This finishes the construction of $H$. 
See Figure~\ref{fig:graphH} for an example. 
Let $\pi$ be any rotation system for $H$ such that:

\begin{itemize}
	\item the restriction of $\pi$ to $A$ is the unique combinatorial embedding of $A$.
	\item for each $i\in [3]$, the edges $x_ia_{i-1}$ and $x_i a_{i+1}$ are consecutive in the 
		cyclic ordering $\pi_{x_i}$. That is, any edge of $H-E(A)$ incident to $x_i$
		is between $x_ia_{i+1}$ and $x_ia_{i-1}$ in the rotation system $\pi_{x_i}$.
\end{itemize} 

\begin{figure}
\centering
	\includegraphics[scale=.9]{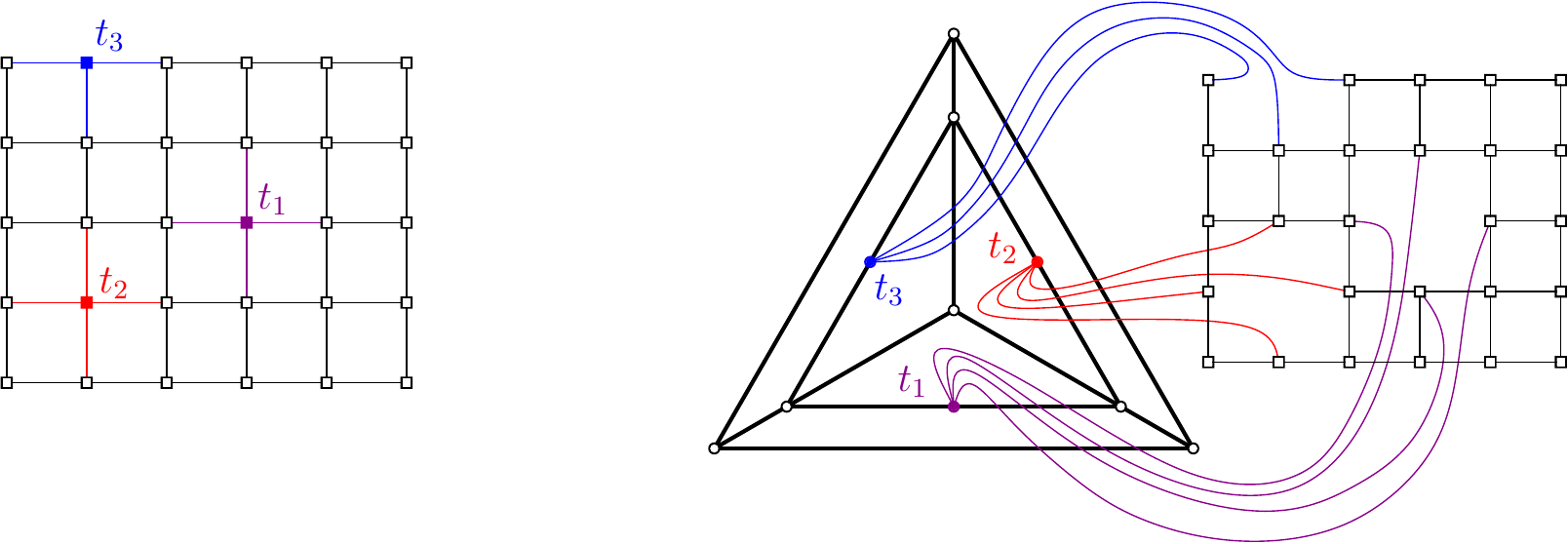}
	\caption{Left: a graph $G$ with vertices $T=\{t_1,t_2,t_3\}$ marked with filled-in squares.
		Right: the corresponding graph $H=H(G,T)$. Thicker edges have weight $n^5=30^5$ and the other edges have weight $1$.
		The rotation system of the drawing is a possible $\pi$.}
	\label{fig:graphH}
\end{figure}

In the next two lemmata
we obtain bounds relating $\CR(H)$ and $\CR(H,\pi)$ to $\MC(G,T)$.
Our bounds are not tight, but this does not affect our eventual results.

\begin{lemma}
\label{le:cro<}
	We have
	\[
		\CR(H,\pi) ~\le~ n^5\cdot \MC(G,T) + 3 n^4.
	\]
\end{lemma}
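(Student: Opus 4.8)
The plan is to convert an optimal solution of {\sc MultiwayCut} into a drawing of $H$ agreeing with $\pi$. Fix an optimal $F\subseteq E(G)$, so $|F|=\MC(G,T)$. Since no path of $G-F$ joins two distinct terminals, every connected component of $G-F$ contains at most one terminal, so we may define a colouring $c\colon V(G)\to[3]$ by letting $c(v)=i$ if the component of $v$ in $G-F$ contains $t_i$, and $c(v)=1$ if it contains no terminal. If $uv\in E(G)$ has $c(u)\ne c(v)$ then $uv\in F$, for otherwise $u$ and $v$ would lie in one component of $G-F$; hence there are at most $|F|=\MC(G,T)$ such edges, which we call \emph{bad} (the others being \emph{good}).

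Now I would build the drawing. Start from the crossing-free drawing of $A$ given by its unique combinatorial embedding. For each $i\in[3]$, the second condition on $\pi$ forces every edge of $H-E(A)$ at $x_i$ to leave $x_i$ on one fixed side, i.e.\ into one particular face $f_i$ of $A$ incident to $x_i$. By the cyclic symmetry of the definition of $A$, the faces $f_1,f_2,f_3$ are of the same combinatorial type, and one checks from the embedding of $A$ that — whether this is the three quadrilateral faces at $a_4$ or the three pentagonal faces — the three faces are pairwise adjacent, each pair $f_i,f_j$ sharing an edge $e_{ij}$ of $A$. Place all non-terminal vertices $v$ with $c(v)=i$ in a small cluster inside $f_i$, and inside $f_i$ draw the subgraph of $G$ spanned by the good edges of colour $i$ together with the stubs at $x_i$ of the $G$-edges incident to $x_i$, obeying the prescribed rotations; this is possible because $f_i$ is a disk with $x_i$ on its boundary, and it is consistent with $\pi_{x_i}$ because all these edges were required to enter $f_i$. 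Finally, route each bad edge $uv$ with $c(u)=i\ne j=c(v)$ from $u$ across $e_{ij}$ to $v$, so that it crosses exactly one edge of $A$ and otherwise stays inside $f_i\cup f_j$; since a terminal $t_\ell$ lies on the boundary of $f_\ell$, this still works when an endpoint is a terminal.

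To count crossings: no two edges of $A$ cross, and a $G$-edge crosses an edge of $A$ only if it is bad, in which case exactly once, so — as $A$-edges have weight $n^5$ and there are at most $\MC(G,T)$ bad edges — the crossings involving an edge of $A$ contribute at most $n^5\cdot\MC(G,T)$. For crossings between two $G$-edges, repeatedly apply the uncrossing move of Section~\ref{sec:prelim} to any pair of $G$-edges meeting more than once: this leaves $\pi$ unchanged, never touches an edge of $A$ (so $A$ stays crossing-free and the total number of $G$-edge/$A$-edge crossings is unchanged, a swapped sub-path carrying its crossings with it), and terminates with every pair of $G$-edges crossing at most once. Since $|E(G)|\le n^2$, the $G$-edge/$G$-edge crossings then number at most $\binom{n^2}{2}<3n^4$. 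Adding the bounds gives a drawing agreeing with $\pi$ of weight at most $n^5\cdot\MC(G,T)+3n^4$, as required.

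The step needing the most care is the claim about $f_1,f_2,f_3$: one must determine, from the unique embedding of $A$, which face each $x_i$ sends its $G$-edges into and check that these three faces are pairwise adjacent, since that is exactly what allows every bad edge to be routed across a single edge of $A$. The colouring argument and the (very loose) count of $G$-edge/$G$-edge crossings are routine.
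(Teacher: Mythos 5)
Your proof is correct and follows essentially the same route as the paper: embed $A$ by its unique combinatorial embedding, draw the pieces of $G-F$ in the faces of $A$ incident to the corresponding terminals, route each cut edge across a single weight-$n^5$ edge shared by two such faces, and absorb all crossings among weight-$1$ edges into a lower-order term via the one-crossing-per-pair uncrossing argument of Section~\ref{sec:prelim}. The only differences are cosmetic bookkeeping (a single global uncrossing bound of $\binom{n^2}{2}$ instead of the paper's per-component $|V(G_i)|^4$ and per-$F$-edge $n^5+2n^2$ counts, and your explicit treatment of the pentagonal-face case arising from the reflected orientation of $\pi|_A$, which the paper leaves implicit).
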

\begin{proof}
	Let $F$ be an optimal solution to {\sc MultiwayCut} for $(G,T)$. 
	Thus we have $|F|=\MC(G,T)$. For each $i\in [3]$, let $G_i$ be the connected
	component of $G-F$ that contains $x_i$. By the optimality of $F$, we have 
	$G=\left( \cup_{i\in [3]} G_i\right) + F$. 
	Indeed, if there would be another connected component, any edge of $F$ connecting that component
	to any other connected component could be removed from $F$ and obtain a better solution.
		
	We construct a drawing $\D$ of $H$ as follows.
	Firstly, take an embedding $A$ without any crossings; such embedding is shown in Figure~\ref{fig:graphD}.
	Then, for each $i\in [3]$, draw the component of $G_i$ inside the region limited by $x_i a_{i-1}a_4 a_{i+1}x_i$ 
	respecting the rotation system $\pi$ and with the minimum number of crossings. 
	Finally, draw each edge of $F$ optimally in the current drawing. 
	In such drawing, an edge connecting $G_i$ to $G_j$
	will be drawn crossing the edge $a_4 a_k$, where $k\not= i,j$.
	See Figure~\ref{fig:sketch} for a sketch. Let $\D$ be the resulting drawing.
	
	We now bound the number of crossings in the drawing $\D$. The restriction $\D(A)$ has no crossings by construction.
	For each $i\in [3]$, the restriction $\D(G_i)$ has at most $|V(G_i)|^4$ crossings
	because, as mentioned in Section~\ref{sec:prelim}, for any rotation system $\pi'$ of the complete graph $K_t$ we have $\CR(K_t,\pi') \le t^4$.
	Each single edge of $F$ can be drawn with $n^5+2n^2$ crossings.
	Indeed, if $v_iv_j\in F$ connects $G_i$ to $G_j$ and we denote by $k$ the element of $[3]\setminus \{i,j\}$,
	there is an arc from $v_i$ to any point on $\D(a_4 a_k)$
	that crosses at most $|E(G_i)|+|F|$ edges, and there is an arc connecting any point in $\D(a_4 a_k)$
	to $v_j$ with at most $|E(G_j)|+|F|$ crossings. The described drawing of $v_iv_j$ has, using a very rough estimate,
	\[
		\left( |E(G_i)|+|F|\right) + n^5+ \left( |E(G_j)|+|F| \right) ~\le~ n^5 + 2 |E(G)| ~\le~ n^5 + 2 n^2
	\]
	crossings. We conclude that
	\begin{align*}
		\CR(\D) ~&=~ \sum_{i\in[3]} |V(G_i)|^4 + \sum_{e\in F} (n^5 + 2n^2)\\
				~&\le~ n^4 + |F|\cdot n^5 + 2\cdot |F| \cdot n^2\\
				~&\le~ \MC(G,T) \cdot n^5 + 3 n^4 \qedhere
	\end{align*}	
\end{proof}
\begin{figure}
\centering
	\includegraphics[width=.45\textwidth]{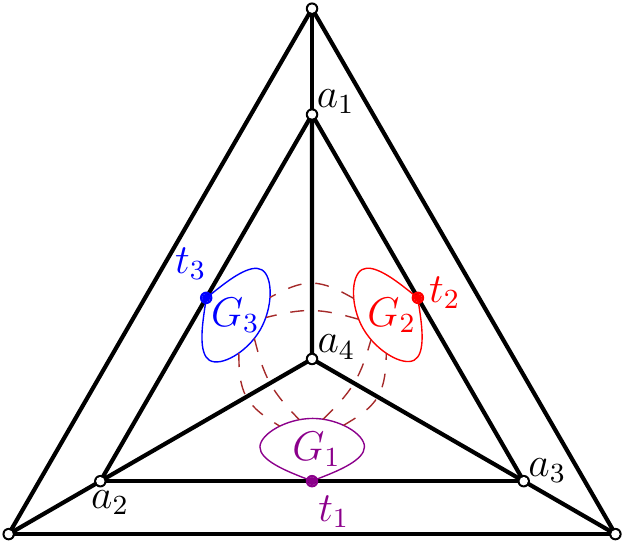}
	\caption{Sketch of the drawing in Lemma~\ref{le:cro<}. The dashed arcs represent edges of $F$.}
	\label{fig:sketch}
\end{figure}

The following result is independent of rotation systems.
\begin{lemma}
\label{le:cro>}
	From any drawing $\D$ of $H$ we can obtain in polynomial time
	a feasible solution $F$ to {\sc MultiwayCut}$(G,T)$ such that
	$|F|\le \CR(\D) /n^5$.
	In particular, 
	\[
		n^5\cdot \MC(G,T) ~\le~ \CR(H).
	\]
\end{lemma}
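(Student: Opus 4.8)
The plan is to show that every drawing $\D$ of $H$ "contains" a multiway cut of $G$ of size at most $\CR(\D)/n^5$, and to read off that cut from the heavy subgraph $A$. First I would argue that in any drawing $\D$ realizing fewer than $n^5$ crossings among the heavy edges — equivalently, fewer than $n^{10}$ total weighted crossings coming from pairs of $A$-edges — the restriction $\D(A)$ is actually a plane embedding of $A$. Indeed, each heavy edge has weight $n^5$, so a single crossing between two heavy edges already costs $n^{10}$; if $\CR(\D) < n^{10}$ this cannot happen, and then $\D(A)$ is an embedding of $A$, which by Whitney's theorem (since $A$ is a subdivision of a $3$-connected planar graph) is the unique combinatorial embedding, the one pictured in Figure~\ref{fig:graphD}. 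If instead $\CR(\D)\ge n^{10}$, then $\CR(\D)/n^5 \ge n^5 > n^2 \ge |E(G)|$, so the bound is vacuous: we may just take $F = E(G)$, which is trivially a feasible multiway cut. So assume from now on that $\D(A)$ is the standard plane embedding.

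Next I would use that embedding to define $F$. The embedding of $A$ has three "inner" faces bounded by the quadrilaterals $x_i a_{i-1} a_4 a_{i+1} x_i$ (together with the outer face); removing the open curve $\D(A)$ from the plane leaves exactly these regions, and each region is separated from the others only by heavy edges of $A$. Now consider the light subgraph $G\subseteq H$ together with its drawing $\D(G)$. For each edge $e$ of $G$, either $\D(e)$ crosses some heavy edge of $A$, or it does not; let $F$ be the set of light edges that cross at least one heavy edge of $A$. Each such crossing costs at least $n^5$ weighted crossings in $\D$ (weight $n^5$ times weight $1$), and distinct edges of $F$ give distinct crossings, so $|F|\le \CR(\D)/n^5$. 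It remains to check $F$ is feasible, i.e. that $G-F$ has no path joining two terminals.

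The key step — and the main obstacle — is this feasibility argument. Suppose $P$ is a path in $G-F$ from $t_i=x_i$ to $t_j=x_j$ with $i\neq j$. Then $\D(P)$ is a curve in the plane from $\D(x_i)$ to $\D(x_j)$ none of whose edges crosses a heavy edge of $A$. I would argue that such a curve is forced, in the standard embedding of $A$, to cross a heavy edge, yielding a contradiction. The point is that $x_i$ and $x_j$ lie on the boundary between different inner faces of the embedding of $A$, and to pass from the vicinity of $x_i$ to the vicinity of $x_j$ without crossing any heavy edge one must route through the faces of $\D(A)$; but here the second bullet in the definition of $\pi$ is irrelevant (this lemma is rotation-free), so the real content is a purely topological separation statement about the faces of the fixed plane graph $A$. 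Concretely: the heavy edges incident to $x_i$, namely (subdivided copies of) $x_i a_{i-1}$ and $x_i a_{i+1}$, together with the rest of $A$, locally separate a neighborhood of $\D(x_i)$ into pieces, and any curve leaving that neighborhood toward another terminal must cross one of the heavy edges of $A$ bounding the face it travels through. Making this rigorous is essentially a careful case analysis on which faces of $\D(A)$ the curve $\D(P)$ visits, using that $\D(P)$ is disjoint from all heavy edges and hence confined to the closure of a single face — and no single face of $\D(A)$ has two terminals on its boundary in a way that lets a path avoid $E(A)$ entirely. This gives the contradiction, so $F$ is feasible, and the displayed inequality $n^5\cdot \MC(G,T)\le \CR(H)$ follows by applying the construction to an optimal drawing $\D$ of $H$ and using $\MC(G,T)\le |F|\le \CR(\D)/n^5 = \CR(H)/n^5$.
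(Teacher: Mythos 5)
Your proposal is correct and follows the same skeleton as the paper's proof: dispose of the case of many crossings by returning $F=E(G)$, use the weight $n^5$ to force $\D(A)$ to be crossing-free, let $F$ be the light edges crossing heavy edges, and derive feasibility from a separation property of the (unique) embedding of $A$. The one place you diverge is the feasibility step: you argue that the connected set $\D(P)\setminus\{\D(x_i),\D(x_j)\}$ lies in a single face of $\D(A)$ and that no facial cycle of $A$ contains two of the vertices $x_1,x_2,x_3$, whereas the paper exhibits, for each $i$, an explicit cycle $C_i=a_4a_{i-1}b_{i-1}b_{i+1}a_{i+1}a_4$ in $A$ that separates $x_i$ from the other two terminals in the embedding, and takes $F$ to be the light edges meeting $\D(C_1)\cup\D(C_2)\cup\D(C_3)$. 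Your version is equally valid --- one checks from Figure~\ref{fig:graphD}, together with Whitney's theorem, that every face of $A$ has at most one $x_j$ on its boundary --- and in fact the single-face observation already makes the ``careful case analysis'' you were bracing for unnecessary. The paper's version is slightly more economical in that only three separating cycles need to be verified (via the Jordan curve theorem) and its $F$ is a subset of yours; both satisfy $|F|\le \CR(\D)/n^5$ since each member accounts for at least one crossing of weight $n^5$. The differing thresholds for the trivial case ($n^{10}$ versus the paper's $n^7$) are both adequate.
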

\begin{proof}
	Consider any drawing $\D$ of $H$.
	If $\CR(\D)$ is larger than $n^7$, we can just take $F=E(G)$ because it satisfies $|F|\le n^2\le \CR(\D)/n^5$.
	If $\CR(\D)$ is smaller than $n^7$, we proceed as follows.
	The restriction of $\D$ to $A$ is an embedding because
	each edge of $A$ has weight $n^5$. 
	For each $i\in [3]$, let $C_i$ denote the cycle $a_4a_{i-1}b_{i-1} b_{i+1} a_{i+1} a_4$.
	In the embedding $\D(A)$ the cycle $C_i$ separates $x_i=t_i$ from $x_j=t_j$, whenever $i\not= j$.

	Define the set of edges 
	\[
		F ~=~ \{ e\in E(G)\mid \mbox{$\D(e)$ intersects $\D(C_1)$, $\D(C_2)$ or $\D(C_3)$}\}.
	\]
	Note that $F$ can be computed in polynomial time from $\D$.
	Since each edge of $F$ crosses (at least once) some edge of $A$, we have 
	\[
		\CR(\D) ~\ge~ n^5 \cdot |F|.
	\]
	Furthermore, $F$ is a feasible solution to {\sc MultiwayCut} for $(G,T)$
	because, for each path $P$ in $G$ that connects $t_i$ to $t_j$, $i\not= j$, the drawing $\D(P)$ has to cross the cycle $\D(C_i)$ and thus $P$ has
	an edge in $F$.
	
	The bound $n^5\cdot \MC(G,T) ~\le~ \CR(H)$ is obtained by considering an optimal drawing $\D^*$ of $H$.
	Such drawing $\D^*$ gives a feasible solution $F$ that satisfies
	\[
		n^5\cdot \MC(G,T) ~\le~ n^5\cdot |F| ~\le~ \CR(\D^*) ~=~ \CR(H).
	\]
	The result follows.	
\end{proof}

We next explain how to construct a cubic graph $\tilde H=\tilde H(G,T)$ such that 
\[
	n^5\cdot \MC(G,T) \le \CR(\tilde H) \le n^5\cdot \MC(G,T) + 3 n^4.
\]
The idea is a straightforward adaptation of the technique used by Pelsmajer et~al.~\cite{PSS}; 
we include the details for the sake of completeness.

In a first step, we construct the unweighted graph $H'=\phi(H)$ and the rotation system
$\pi'=\phi_\pi(H(G,T),\pi)$ in $H'$, as described in Section~\ref{sec:prelim}.
It holds that $\CR(H')=\CR(H)$ and $\CR(H',\pi')=\CR(H,\pi)$.

In a second step, we replace each vertex $v\in H'$ by a cubic grid $\mathcal{C}_v$ 
of width $\deg_{H'}(v)$ and height $4n^7$; see Figure~\ref{fig:grid}.
(The cubic grid of width $d$ and height $h$ is obtained from a regular, rectangular grid of width $2d$ and height $h$
where the vertical edge connecting $(i,j)$ to $(i,j+1)$ is removed whenever $i+j$ is odd.) 
If $e^v_1,\dots,e^v_{\deg_{H'}(v)}$ are the edges incident to $v$ in $H'$ ordered as in the cyclic ordering $\pi'_v$,
then we attach the edges $e^v_1,\dots,e^v_{\deg_{H'}(v)}$ to the degree-two consecutive vertices of the cubic grid $\mathcal{C}_v$
that are on the higher row. 
Finally, we make the graph cubic by removing vertices of degree $1$ and contracting some edges incident to vertices
of degree $2$.  This finishes the construction of $\tilde H=\tilde H(G,T)$.
Note that the construction of $\tilde H$ can be made in polynomial time because the weight of each edge of $H$ is bounded by $n^5$.

\begin{figure}
\centering
	\includegraphics[width=.6\textwidth]{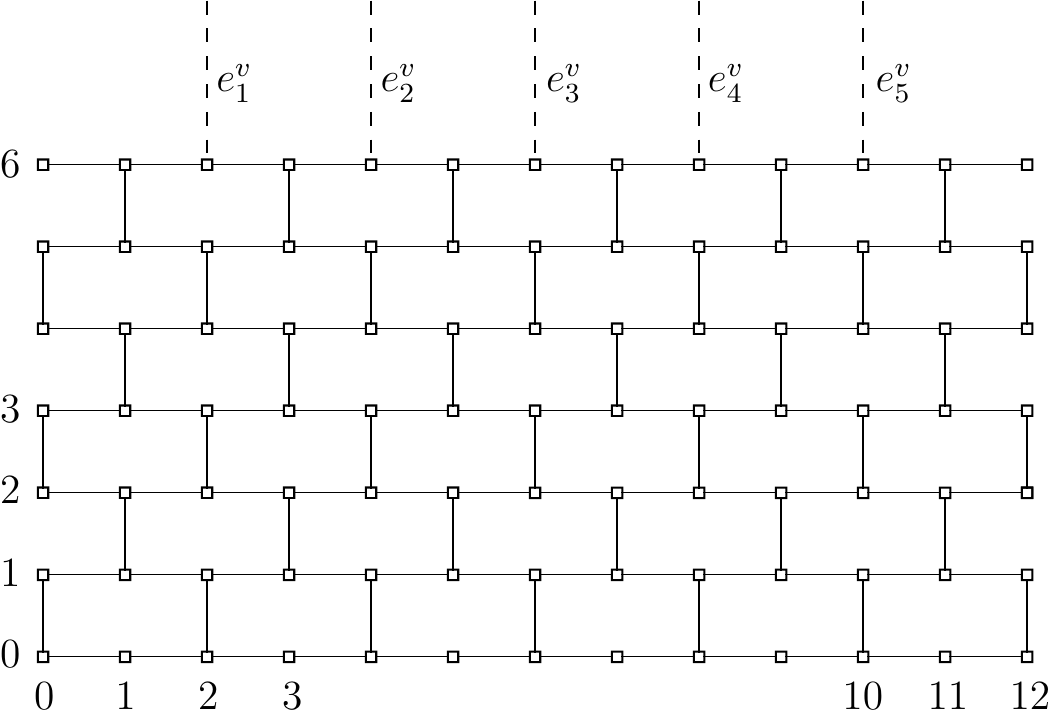}
	\caption{The solid edges form a cubic grid of width $6$ and height $6$. 
		The dashed edges show how the edges $e^v_1,e^v_2,\dots$ get attached to $\mathcal{C}_v$.}
	\label{fig:grid}
\end{figure}

\begin{lemma}
\label{le:tilde}
	We have 
	\[
		n^5\cdot \MC(G,T) ~\le~ \CR(\tilde H) ~\le~ n^5\cdot \MC(G,T) + 3 n^4.
	\]
	Furthermore, from any drawing $\tilde \D$ of $\tilde H$ we can obtain in polynomial time
	a feasible solution $F$ to {\sc MultiwayCut}$(G,T)$ such that $|F|\le \CR(\tilde \D) /n^5$.
\end{lemma}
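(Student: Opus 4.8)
The plan is to prove the two inequalities separately and to extract the algorithmic statement from the proof of the lower bound together with Lemma~\ref{le:cro>}.

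\textbf{Upper bound.} I would start from an optimal drawing $\D'$ of $H'$ that agrees with $\pi'$. By the identity $\CR(H',\pi')=\CR(H,\pi)$ recalled above and by Lemma~\ref{le:cro<}, it has at most $n^5\cdot\MC(G,T)+3n^4$ crossings. I then realize the vertex-to-grid replacement geometrically: inside a tiny disk around $\D'(v)$ I draw $\mathcal{C}_v$ in its planar embedding with the top row on the boundary of the disk, so that the attachment vertices occur on the boundary in the cyclic order dictated by $\pi'_v$. Because $\D'$ agrees with $\pi'$, the edges $e^v_1,\dots,e^v_{\deg_{H'}(v)}$ leave $\D'(v)$ in exactly that cyclic order, so each of them can be reconnected to $\mathcal{C}_v$ without creating a crossing; since nothing outside the tiny disks changes, the resulting drawing has the same number of crossings. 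Deleting degree-one vertices and suppressing degree-two vertices does not change the crossing number, so $\CR(\tilde H)\le n^5\cdot\MC(G,T)+3n^4$.

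\textbf{Lower bound and algorithm.} Let $\tilde\D$ be any drawing of $\tilde H$. If $\CR(\tilde\D)\ge n^7$, output $F=E(G)$; it is feasible and $|F|\le n^2\le\CR(\tilde\D)/n^5$. If $\CR(\tilde\D)<n^7$, the goal is to turn $\tilde\D$, in polynomial time, into a drawing $\D$ of $H'$ with $\CR(\D)\le\CR(\tilde\D)$ by collapsing each grid $\mathcal{C}_v$ back to a single vertex (the heart of the argument, discussed below). Given such $\D$, and since $H'=\phi(H)$, I route each weighted edge of $H$ along the path of its family $P_e$ participating in fewest crossings, as in Section~\ref{sec:prelim}, obtaining a drawing $\D_H$ of $H$ with $\CR(\D_H)\le\CR(\D)\le\CR(\tilde\D)$; then Lemma~\ref{le:cro>} applied to $\D_H$ produces in polynomial time a feasible $F$ with $|F|\le\CR(\D_H)/n^5\le\CR(\tilde\D)/n^5$. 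Applying this to an optimal drawing of $\tilde H$ gives $n^5\cdot\MC(G,T)\le n^5\cdot|F|\le\CR(\tilde H)$, which together with the upper bound proves the lemma.

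\textbf{The main obstacle.} The substantial point is the collapse of the grids, and this is exactly where the height $4n^7$ enters. Since $\CR(\tilde\D)<n^7$ while each $\mathcal{C}_v$ contains $4n^7$ pairwise edge-disjoint rows, a counting argument produces, inside each $\mathcal{C}_v$, a clean cycle $Z_v$, i.e.\ one whose $\tilde\D$-image is a simple closed curve disjoint from every other edge of $\tilde H$. Mimicking the argument of Pelsmajer et~al.~\cite{PSS}, such a $Z_v$ forces $\mathcal{C}_v$ to behave like a single vertex in $\tilde\D$: one can shrink $\mathcal{C}_v$ to a point, deleting its edges and rerouting the stubs of $e^v_1,\dots,e^v_{\deg_{H'}(v)}$ as pairwise non-crossing arcs near $\tilde\D(Z_v)$, so that no new crossing is created while the crossings incident to $\mathcal{C}_v$ disappear; carrying this out simultaneously for all $v$ over pairwise disjoint neighbourhoods of the clean cycles yields the drawing $\D$ of $H'$. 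I expect the delicate part to be precisely this rigidity statement: showing that a clean cycle of the tall grid genuinely confines the gadget and that the collapse cannot introduce a crossing for an adversarial $\tilde\D$ (ruling out, for instance, that some other edge or some other grid is trapped on the wrong side of $\tilde\D(Z_v)$), and verifying that the constant $4$ is large enough for the counting. This is the step that has to follow \cite{PSS} in detail rather than be sketched.
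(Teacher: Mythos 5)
Your overall skeleton matches the paper's proof: the upper bound via local replacement of each $\D'(v)$ by the grid drawn in a small disk in the order $\pi'_v$, the threshold case $\CR(\tilde\D)\ge n^7$ with $F=E(G)$, and the chain ``collapse $\tilde\D$ to a drawing of $H'$, pass from $\phi(H)$ back to $H$ by choosing the least-crossed path in each family $P_e$, then invoke Lemma~\ref{le:cro>}'' are all exactly what the paper does, and your unified treatment of the lower bound through the algorithmic statement is fine. The gap is precisely the step you flag and defer: the collapse of each grid $\mathcal{C}_v$. You propose to find a \emph{clean cycle} $Z_v$ and then invoke a rigidity/confinement statement in the style of \cite{PSS} --- shrink $\mathcal{C}_v$ to a point and reroute the stubs of $e^v_1,\dots,e^v_{\deg_{H'}(v)}$ ``as pairwise non-crossing arcs near $\tilde\D(Z_v)$.'' Taken literally this does not work without substantial extra argument: the attachment vertices of the $e^v_i$ lie on the top row of $\mathcal{C}_v$, and in an adversarial drawing their images may be far from $\tilde\D(Z_v)$, separated from it by other edges of $\tilde H$ (or by other grids), so the new arcs you draw toward $Z_v$ can be forced to cross; a clean cycle by itself confines nothing, since either side of $\tilde\D(Z_v)$ may contain arbitrary parts of the drawing. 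You correctly identify this as the delicate point, but you do not supply the argument, and the lemma's proof lives or dies on it.

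The paper avoids any confinement claim by a simpler device, and this is the idea missing from your write-up: instead of a clean cycle, take a horizontal row $R_v$ of $\mathcal{C}_v$ that participates in no crossing (the counting over the $4n^7$ rows gives this, since each crossing spoils at most two rows and $\CR(\tilde\D)<n^7$, resp.\ $<4n^7$ for the optimal drawing). Keep, inside $\mathcal{C}_v$, the $\deg_{H'}(v)$ vertex-disjoint paths $P^v_i$ joining the attachment vertex of $e^v_i$ to $R_v$ --- these are part of the existing drawing, so no new arcs and no new crossings are needed --- delete all other edges of $\mathcal{C}_v$, and contract $R_v$ to a point; since $\D(R_v)$ is crossing-free, a sufficiently small neighbourhood of its image is a disk meeting the remaining drawing only in the ends of the retained paths, so the contraction creates no crossings. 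The result is a drawing of a subdivision of $H'$ with at most $\CR(\tilde\D)$ crossings, which is all that the rest of your argument needs. So the fix is not to prove the rigidity statement you postpone, but to replace the ``reroute near a clean cycle'' step by ``follow the grid's own disjoint paths to a clean row and contract that row.''
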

\begin{proof}
	It is clear that any drawing $\D'$ of $H'$ with rotation system $\pi'$ can be 
	converted into a drawing of $\tilde H$ by a local replacement around $\D'(v)$, for each $v\in V(H')$, 
	without introducing additional crossings. Therefore 
	\[
	\CR(\tilde H) ~\le~ \CR(H',\pi') ~=~ \CR(H,\pi) \le n^5\cdot \MC(G,T) + 3 n^4
	\]
	because of Lemma~\ref{le:cro<}.
	
	To see the other inequality, consider an optimal drawing $\tilde\D$ of $\tilde H$.
	The first part of the proof implies that $\CR(\tilde\D)<4n^7$.
	Therefore in each cubic grid $\mathcal{C}_v$, $v\in V(H')$,
	there is at least one horizontal row, let's call it $R_v$, that does not participate in any crossing of $\tilde \D$.
	For each vertex $v\in V(H')$, there are $\deg_{H'}(v)$ vertex-disjoint paths $P^v_i$ in $\mathcal{C}_v$, $i=1\dots \deg_{H'}(v)$,
	connecting the endvertex of $e^v_i$ to a vertex of $R_v$. We contract the row $R_v$ to a point
	and remove from the drawing all the edges of $\mathcal{C}_v$, but those participating in the paths $P^v_1,\dots, P^v_{\deg_{H'}(v)}$.
	Repeating this for each vertex $v\in V(H')$ we obtain a drawing $\D'$ of a subdivision of $H'$ 
	with at most $\CR(\tilde \D)$ crossings.
	This implies that $\CR(H)=\CR(H') \le \CR(\tilde D) = \CR(\tilde H)$. 
	By Lemma~\ref{le:cro>} it follows that
	\[
		\CR(\tilde H) ~\ge~ \CR(H) ~\ge~ n^5\cdot \MC(G,T).
	\]
	
	To obtain from a drawing $\tilde \D$ of $\tilde H$ a feasible solution $F$ with $|F|\le \CR(\tilde \D) /n^5$,
	we proceed as follows. If $\CR(\tilde \D)\ge n^7$, we just return $F=E(G)$. 
	Otherwise we construct the drawing $\D'$ of $H'$ from $\tilde \D$, as described above. 
	As discussed in Section~\ref{sec:prelim}, from
	the drawing $\D'$ of $H'=\phi(H)$ we can obtain a drawing $\D$ of $H$
	with $\CR(\D)\le \CR(\D')$.
	Finally, from the drawing $\D$ of $H$ we can use Lemma~\ref{le:cro>} to extract a feasible solution 
	$F$ to {\sc MultiwayCut}$(G,T)$ such that 
	\[
		|F|~\le~ \CR(\D) /n^5 ~\le~ \CR(\D')/n^5 ~\le~ \CR(\tilde\D).
	\]	
	Since all the steps can be carried out in polynomial time, the result follows.
\end{proof}

\medskip

\begin{theorem}
\label{thm:hard1} 
	There is a constant $c_0>1$ such that, if $P\not=$NP,
	there is no $c_0$-approximation algorithm for {\sc CrossingNumber}, even when restricted to cubic graphs.
\end{theorem}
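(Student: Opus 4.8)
The plan is a standard gap-preserving (Karp-style) reduction from {\sc MultiwayCut} restricted to three terminals, using Lemma~\ref{le:tilde} as a black box so that essentially no new geometric work is needed. Recall from Dahlhaus et al.~\cite{djpsy-94} that there is a constant $c_M>1$ for which, unless P$=$NP, no polynomial-time algorithm outputs, for every instance $(G,T)$ with $|T|=3$, a feasible solution $F$ with $|F|\le c_M\cdot\MC(G,T)$. I would fix $c_0 := (1+c_M)/2$, so that $1<c_0<c_M$, and show that a hypothetical polynomial-time $c_0$-approximation algorithm $\mathcal{A}$ for {\sc CrossingNumber} on cubic graphs would yield exactly such a forbidden algorithm for {\sc MultiwayCut} with three terminals, contradicting P$\neq$NP.

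Given an instance $(G,T)$ with $|T|=3$, write $n=|V(G)|$, and note $\MC(G,T)\ge 1$ since $G$ is connected and has at least two terminals. First I would dispose of the finitely many small instances: for $n\le N_0$, where $N_0$ is a constant fixed below, $G$ has $O(1)$ edges, so an optimal $F$ can be found by brute-force enumeration in $O(1)$ time. For $n>N_0$, the procedure is: build the cubic graph $\tilde H=\tilde H(G,T)$ in polynomial time; run $\mathcal{A}$ on $\tilde H$ to get a drawing $\tilde\D$ with $\CR(\tilde\D)\le c_0\cdot\CR(\tilde H)$; and apply Lemma~\ref{le:tilde} to extract in polynomial time a feasible $F$ for {\sc MultiwayCut}$(G,T)$ with $|F|\le\CR(\tilde\D)/n^5$.

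It then remains to bound $|F|$. Chaining these inequalities with the upper bound from Lemma~\ref{le:tilde} gives
\[
	|F| ~\le~ \frac{\CR(\tilde\D)}{n^5} ~\le~ \frac{c_0\cdot\CR(\tilde H)}{n^5} ~\le~ \frac{c_0\bigl(n^5\cdot\MC(G,T)+3n^4\bigr)}{n^5} ~=~ c_0\cdot\MC(G,T)+\frac{3c_0}{n}.
\]
Taking $N_0:=\lceil 3c_0/(c_M-c_0)\rceil$ ensures that for $n>N_0$ we have $3c_0/n<c_M-c_0\le(c_M-c_0)\cdot\MC(G,T)$, hence $|F|<c_M\cdot\MC(G,T)$. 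So in every case the procedure runs in polynomial time (every step but the call to $\mathcal{A}$ is polynomial because the weights of $H$ are bounded by $n^5$, as in Lemma~\ref{le:tilde} and Section~\ref{sec:prelim}) and returns a feasible $F$ with $|F|\le c_M\cdot\MC(G,T)$; this is the forbidden approximation algorithm for {\sc MultiwayCut} with three terminals, so P$=$NP, a contradiction.

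As for the main obstacle: there is none of substance, since Lemmas~\ref{le:cro<}--\ref{le:tilde} already isolate all the difficulty of the reduction. The only points requiring care are the bookkeeping — absorbing the additive error $3n^4$, which becomes a $3c_0/n$ additive term after dividing by $n^5$ and is killed by choosing $N_0$ large enough, and observing $\MC(G,T)\ge1$ so that a purely additive error can be turned into a multiplicative one — together with the routine separate handling of inputs of bounded size. The argument in fact works for any constant $c_0$ with $1<c_0<c_M$.
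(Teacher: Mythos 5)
Your proposal is correct and follows essentially the same route as the paper: the same reduction via $\tilde H(G,T)$ and Lemma~\ref{le:tilde}, the same chain of inequalities yielding $|F|\le c_0\cdot\MC(G,T)+3c_0/n$, and the same device of handling small $n$ by brute force so the additive term is absorbed; your choice $c_0=(1+c_M)/2$ with threshold $N_0=\lceil 3c_0/(c_M-c_0)\rceil$ is just the paper's $c_0=c_M-\eps$, $n\ge 3c_0/\eps$ in different notation. You also make explicit the (correct) observation $\MC(G,T)\ge 1$, which the paper uses implicitly in the step $c_0\cdot\MC(G,T)+\eps\le(c_0+\eps)\cdot\MC(G,T)$.
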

\begin{proof}
	Let $c_M>1$ be a constant such that it is NP-hard to compute a $c_M$-approximation to {\sc MultiwayCut} when $|T|=3$.
	(See the discussion in Section~\ref{sec:prelim}.)
	Take $c_0=c_M-\eps$ for an arbitrary constant $\eps$ with $0<\eps<c_M-1$. 
	We will see that it is NP-hard finding a $c_0$-approximation to {\sc CrossingNumber} in cubic graphs. 

	Assume, for the sake of contradiction, that there is an $c_0$-approximation algorithm for 
	{\sc CrossingNumber} in cubic graphs. 
	We can then obtain a $c_M$-approximation
	to {\sc MultiwayCut}$(G,T)$ in polynomial time, as follows. 
	Let $n=|V(G)|$.
	If $n$ is smaller than $3c_0/\eps$, which is a constant, we run any brute force algorithm.
	Otherwise, we construct in polynomial time the cubic graph $\tilde H=\tilde H(G,T)$, as described above, 
	use the $c_0$-approximation algorithm to compute a drawing $\tilde \D$ of $\tilde H$ with 
	$\CR(\tilde \D)\le c_0\cdot \CR(\tilde H)$, use Lemma~\ref{le:tilde} to find a feasible solution 
	$F$ to {\sc MultiwayCut}$(G,T)$, and return $F$.
	We next argue that this algorithm is a $c_M$-approximation algorithm for {\sc MultiwayCut}.
	
	Because of Lemma~\ref{le:tilde}, $F$ is a feasible solution with $|F|\le \CR(\tilde \D) /n^5$.
	On the other hand, $\CR(\tilde \D) ~\le~ c_0\cdot \CR(\tilde H)$ because $\D$ is a $c_0$-approximation to $\CR(\tilde H)$. 
	Using Lemma~\ref{le:tilde} we obtain
	\begin{align*}
		|F|~&\le~ \frac{\CR(\tilde \D)}{n^5} \\
			&\le~ c_0 \cdot \frac{\CR(\tilde H)}{n^5}\\
			&\le~ c_0 \cdot \frac{n^5\cdot \MC(G,T) + 3 n^4}{n^5}\\
			&\le~ c_0 \cdot \MC(G,T) + 3c_0/n\\
			&\le~ c_0 \cdot \MC(G,T)+ \eps \\
			&\le~ (c_0+\eps) \cdot \MC(G,T)	\\
			&=~ c_M \cdot \MC(G,T).		
	\end{align*}
	Thus, returning $F$ we obtain a $c_M$-approximation to $\MC(G,T)$,
	which is not possible unless P$\not=$NP.
\end{proof}

Bokal et al.~\cite{BFM} introduced the concept of minor crossing number.
Hlin\v{e}n\'y~\cite{Hli} noted that for cubic graphs the crossing number and the minor crossing number have the same value.
We thus obtain the following. 

\begin{corollary}
\label{cor:hard1} 
	There is a constant $c_0>1$ such that, if $P\not=$NP,
	there is no $c_0$-approximation algorithm for the minor crossing number.
\end{corollary}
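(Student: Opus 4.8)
The plan is to derive Corollary~\ref{cor:hard1} from Theorem~\ref{thm:hard1} together with a single additional fact, due to Hlin\v{e}n\'y~\cite{Hli}: on cubic graphs the minor crossing number $\operatorname{mcr}$ coincides with the ordinary crossing number $\CR$. I would keep the very same constant $c_0>1$ supplied by Theorem~\ref{thm:hard1}. The point is that, although $\operatorname{mcr}$ and $\CR$ differ for general graphs, they agree on exactly the family of cubic instances $\tilde H(G,T)$ produced by the reduction of Section~\ref{sec:hard1}, so the hardness transfers directly.

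First I would record that $\tilde H=\tilde H(G,T)$ is cubic, hence $\operatorname{mcr}(\tilde H)=\CR(\tilde H)$, and I would note that this identity is constructive: given a drawing of any graph $H^{\ast}$ that has $\tilde H$ as a minor, together with the branch sets witnessing the minor relation, one contracts each branch set to a single point; since $\tilde H$ is cubic, each branch set has only three edges leaving it, so the contraction can be carried out in polynomial time without creating new crossings, producing a drawing $\tilde\D$ of $\tilde H$ whose number of crossings is at most that of the drawing of $H^{\ast}$ we started from.

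Next, I would assume for contradiction that some polynomial-time algorithm $\mathcal{A}$ computes a $c_0$-approximation for the minor crossing number; being an approximation algorithm, it returns a witness, namely a drawing of some graph having the input as a minor. Running $\mathcal{A}$ on $\tilde H$ and applying the contraction above yields, in polynomial time, a drawing $\tilde\D$ of $\tilde H$ with $\CR(\tilde\D)\le c_0\cdot\operatorname{mcr}(\tilde H)=c_0\cdot\CR(\tilde H)$. From this point the argument is word for word the proof of Theorem~\ref{thm:hard1}: handle the trivial case $n<3c_0/\eps$ by brute force, and otherwise use Lemma~\ref{le:tilde} to extract in polynomial time a feasible solution $F$ to {\sc MultiwayCut}$(G,T)$ with $|F|\le\CR(\tilde\D)/n^5$, and then
\[
|F|\ \le\ \frac{\CR(\tilde\D)}{n^5}\ \le\ c_0\cdot\frac{\CR(\tilde H)}{n^5}\ \le\ c_0\cdot\frac{n^5\cdot\MC(G,T)+3n^4}{n^5}\ \le\ c_0\cdot\MC(G,T)+\frac{3c_0}{n}\ \le\ c_M\cdot\MC(G,T),
\]
exactly as in the proof of Theorem~\ref{thm:hard1}. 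Thus we would obtain a polynomial-time $c_M$-approximation for {\sc MultiwayCut} with three terminals, contradicting the assumption $P\neq$NP.

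I expect the only delicate point to be the claim in the previous paragraph that an approximation algorithm for $\operatorname{mcr}$ actually hands back a drawing (together with the minor model), rather than merely the optimal value, and relatedly that Hlin\v{e}n\'y's equality is used in its effective form so that the passage to an ordinary drawing of the cubic graph $\tilde H$ is polynomial. Once this is granted, the corollary is an immediate reuse of Lemma~\ref{le:tilde} and Theorem~\ref{thm:hard1}, with no new calculation required.
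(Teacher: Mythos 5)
Your proposal is correct and follows the same route as the paper, which derives the corollary immediately from Theorem~\ref{thm:hard1} via Hlin\v{e}n\'y's observation that the minor crossing number equals the crossing number on cubic graphs. The extra care you take about effectivity (that the approximation algorithm returns a witness drawing of a graph containing $\tilde H$ as a minor, and that for a cubic $\tilde H$ one can pass to a topological-minor model and hence to a drawing of $\tilde H$ itself in polynomial time without increasing the number of crossings) is exactly the detail the paper leaves implicit, and your handling of it is sound.
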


\section{Conclusions}
Since there are constant-factor approximation algorithms for {\sc MultiwayCut}, 
a more careful reduction from {\sc MultiwayCut} will not bring us beyond hardness of constant-factor approximations. 
Nevertheless, it seems hard to believe that there is an $O(1)$-approximation algorithm for {\sc CrossingNumber}.
As mentioned in the introduction, the currently best approximation factor is roughly $O(n^{9/10})$.

A natural approach to improve the inapproximability result would be to reduce from a problem that
is known to be harder. 
The problems \emph{$0$-extension} and \emph{MetricLabeling} are generalizations of {\sc MultiwayCut}
and stronger inapproximability results are known~\cite{cn-07,kkmr-09,mnrs-08}. 
However, we have not been able to obtain fruitful reductions from those problems to {\sc CrossingNumber}.

It remains a tantalizing open problem whether {\sc CrossingNumber} can be solved in polynomial
time for graphs with bounded treewidth. 
An obstacle is that we do not know whether {\sc LinearArrangement} is NP-hard 
for graphs of bounded treewidth. If that would be the case, then
the reduction of Garey and Johnson~\cite{GJ} increases the treewidth by a constant. 
On the other hand, {\sc MultiwayCut} is solvable in polynomial-time for graphs of bounded treewidth:
Chopra and Rao~\cite{cr-91} discuss treewidth 2 and Dahlhaus et al.~\cite{djpsy-94} note that it works for any bounded treewidth.
Thus, the approach of this paper cannot lead to an NP-hardness proof of {\sc CrossingNumber} for graphs of bounded treewidth.

\section*{Acknowledgments} 
I am grateful to Bojan Mohar for discussions and to Petr Hlin\v{e}n\'y for pointing out the extension to cubic graphs.

\bibliographystyle{abbrv}
\bibliography{bibliography}
\end{document}